\newtheorem{theorem}{Theorem}
\newtheorem{proposition}{Proposition}
\newtheorem{definition}[theorem]{Definition}
\newcommand*{\cA}{\mathcal{A}} 
\newcommand*{\cC}{\mathcal{C}}
\newcommand*{\cI}{\mathcal{I}}
\newcommand*{\cM}{\mathcal{M}}
\newcommand*{\id}{\mathsf{id}}
\newcommand*{\ket}[1]{|#1\rangle}
\newcommand*{\bra}[1]{\langle #1|}
\newcommand*{\proj}[1]{\ket{#1}\bra{#1}}
\newcommand*{\Tr}{\mathsf{Tr}}
\newcommand{\beq}{\begin{equation}}
\newcommand{\eeq}{\end{equation}}
\newcommand{\channel}[2]{\Lambda^{#1\rightarrow #2}}
\begin{document}

\title{Channel Steering}

\author{Marco Piani}
\affiliation{SUPA  and Department of Physics, University of Strathclyde, Glasgow G4 0NG, UK}
\affiliation{Institute for Quantum Computing and Department of Physics and Astronomy, University of Waterloo, N2L 3G1 Waterloo ON, Canada}

\begin{abstract} We introduce and study the notion of steerability for channels. This generalizes the notion of steerability of bipartite quantum states. We discuss a key conceptual difference between the case of states and the case of channels:  while state steering deals with the notion of ``hidden'' states, steerability in the channel case is better understood in terms of coherence of channel extensions, rather than in terms of ``hidden'' channels. This distinction vanishes in the case of states. We further argue how the proposed notion of lack of coherence of channel extensions coincides with the notion of channel extensions realized via local operations and classical communication. We also discuss how the Choi-Jamio{\l}kowski isomorphism allows the direct application of many results about states to the case of channels. We introduce measures for the steerability of channel extensions.
\end{abstract}


\maketitle 

\section{Introduction}
The phenomenon of quantum steering has attracted much attention recently, both from the theoretical and experimental perspective (see, e.g.,~\cite{PhysRevLett.113.160403,PhysRevLett.113.160402,PhysRevLett.113.140402} and references therein).  Already identified in the early days of quantum mechanics~\cite{einstein1935can,schroedinger},  but put on solid theoretical grounds only recently~\cite{wisemanPRL2007,jonesPRA2007}, steering is a property of states of bipartite systems regarding the fact that one party, through local measurements, can realize ensembles for the reduced state of the other party that do not admit an explanation in terms of classical correlations, and in particular in terms of ``hidden states'' of this second party~\cite{wisemanPRL2007,jonesPRA2007}. Focusing on steering allows, e.g., entanglement verification~\cite{entverification} and quantum key distribution~\cite{branciard2012} in scenarios where one of the parties (or at least her measurement devices) are not trusted. In recent times, steering has found connections with other fundamental topics in quantum information processing like joint measurability~\cite{PhysRevLett.113.160403,PhysRevLett.113.160402} and the discrimination of physical processes~\cite{piani2014einstein}.

In this paper we generalize the notion of steering and steerability to quantum channels. This generalization is such that standard steering for quantum states can be seen as a special case of steering for quantum channels.

The central objects of our investigation are the extensions of a given quantum channel, i.e., broadcast quantum channels~\cite{yard2011quantum} with one sender and two receivers, such that by disregarding one of the outputs we recover the original channel of interest. The practical scenario that one may have in mind is the following. There is a quantum transformation (a quantum channel) from $C$ to $B$, which we can imagine is applied/used by Bob. Such transformation is in general noisy (i.e., not unitary) with, information ``leaking'' to the environment. Suppose Alice has access to some part $A$ of said environment. The question is: is Alice coherently connected to the input-output of the channel, or can she be effectively considered just a ``classical bystander'', with at most access to classical information about the transformation that affected the input of the channel? In the latter case we will think of the map from $C$ to $AB$ as of an \emph{incoherent} extension of the channel from $C$ to $B$.

We will define steerability of a channel extension as the possibility for Alice to prove to Bob that she is not a classical bystander, in the sense that the leakage of  information from $C$ to $A$ cannot be described in terms of a classical channel. The way Alice proves to Bob that she is a ``quantum bystander'', rather than a classical one, is by informing him of the choice of measurements performed on the output $A$, and of the outcomes of such measurements. We will focus on the case where such information only refers to labels for  measurements and outcomes; it is device-independent (just on Alice's side, hence \emph{semi}-device-independent) information, as it does not rely on the details/implementation of such measurements. In this way, the verification procedure does not require Bob (or for that, Alice) to trust Alice's measurement devices. Every choice of measurement by Alice corresponds to a different decomposition into subchannels of the channel used by Bob. We suppose Bob to be able to perform full channel tomography, i.e., have a perfect description of the subchannels  (see the next section for definitions) induced by the measurements of Alice. Based on the reconstructed subchannels, Bob has to decide whether Alice has access to a coherent extension of his channel. If Bob can be convinced that Alice does have access to the other output port of a coherent extension, we say that such an extension is steerable.

In the following, after introducing the notation and the various concepts more formally, we provide several results and make a number of considerations, especially about the similarities and differences between state steering and channel steering. In particular, we argue how, while state steering deals with the notion of hidden states, in the case of channel extensions the approach based on the notion of coherence of the  extension is more proper than the notion of ``hidden channel''. This distinction vanishes in the special---and standard---case of state steering. Other issues that we consider include: how incoherent extensions have an interpretation as extensions realized by Local Operations and Classical Communication (LOCC); the quantification of the steerability of channel extensions; the use of the Choi-Jamio{\l}kowski isomorphism for mapping the study of channel steerability to state steerability.
 
\section{Notation and Definitions}

In this section we set the notation and provide the definitions that will be used in the rest of the paper.

\begin{definition}
Let $\Lambda^{C\rightarrow B}$ be a channel, i.e., a completely-positive trace-preserving linear map. We say that the broadcast channel~\cite{yard2011quantum} $\Lambda^{C\rightarrow AB}$ is a \emph{channel extension} for $\Lambda^{C\rightarrow B}$ if $\Lambda^{C\rightarrow B}=\Tr_A \circ \Lambda^{C\rightarrow AB}$. We say that  $\Lambda^{C\rightarrow AB}$ is a \emph{dilation} in the special case it is an isometric transformation.
\end{definition}

It is well known that the dilation of a channel is uniquely defined up to an isometry on $A$~\cite{nielsen2010quantum}.

\begin{definition}
The collection of completely-positive maps $\{\Lambda_a\}$ is an \emph{instrument} $\cI$ if $\sum_a\Lambda_a$ is a channel. In such a case, each $\Lambda_a$ is a \emph{subchannel}, i.e., a completely positive trace-non-increasing linear map.
\end{definition}
It is a direct consequence of the Stinespring dilation theorem that, given any instrument decomposition of some channel, the information about which subchannel of the instrument is actually applied to the input of the channel is in principle available to some party~\cite{davies1970operational,nielsen2010quantum,michalhorodeckientmeasures}. Indeed, any instrument can be realized by performing a measurement, described by a Positive-Operator-Valued Measure (POVM) $\{M_a\}_a$, $M_a\geq 0$, $\sum_a M_a = \openone$, on the extension $A$ of a dilation of the channel. A proof can also be obtained by considering the results of~\cite{hughston1993complete} that regard the realization of arbitrary ensembles for a given mixed state when one has access to its purification, and by applying such results to the Choi-Jamio{\l}kowski state isomorphic to the channel (see below). On the other hand, any measurement on $A$, for any extension $\channel{C}{AB}$, realizes some instrument for the channel $\channel{C}{B}$.

We now formalize the notion of incoherent extension of a channel.
\begin{definition}
\label{def:incoherent}
We say that $\channel{C}{AB}$ is an \emph{incoherent extension} of $\channel{C}{B}$ if there exists an instrument $\{{\Lambda}_{\lambda}\}_\lambda$ and normalized (i.e., with unit trace) quantum states $\{\sigma_\lambda^A\}_\lambda$ such that
\beq
\label{eq:incoherent}
\channel{C}{AB}=\sum_\lambda \channel{C}{B}_\lambda\otimes \sigma^A_\lambda.
\eeq
We say that $\channel{C}{AB}$ is a \emph{coherent extension} of $\channel{C}{B}$ if it is not incoherent.
\end{definition}
Notice that an incoherent extension can be thought of as the combination of an incoherent extension
\beq
\label{eq:instrument}
\sum_\lambda \channel{C}{B}_\lambda\otimes \proj{\lambda}^{A'},
\eeq
with orthonormal states $\ket{\lambda}$, followed by the preparation $\proj{\lambda}^{A'}\mapsto\sigma_\lambda^A$. 
Definition~\ref{def:incoherent} captures the idea that, for an incoherent extension, the information that leaks to $A$ is at most information about which subchannel $\channel{C}{B}_\lambda$ of the channel $\channel{C}{B}$ was applied to the input (``most'' corresponding to the case where the $\sigma_\lambda$'s are also orthogonal).

Notice also that for any given input $\rho^C$ on $C$, the action of an incoherent extension produces a \emph{separable} (i.e., unentangled) state on $AB$:
\[
\begin{split}
\sum_\lambda \channel{C}{B}_\lambda[\rho^C]\otimes \sigma^A_\lambda = \sum_\lambda p_\lambda\rho^B_\lambda \otimes \sigma^A_\lambda
\end{split}
\]
with $p_\lambda=\Tr_B(\channel{C}{B}_\lambda[\rho^C])$ and $\rho^B_\lambda=\channel{C}{B}_\lambda[\rho^C]/p_\lambda$, and on the right-hand side the generic expression of a separable state.

Finally, notice that we only ask the $\Lambda_\lambda$'s to form an instrument, each being a subchannel, and we do not require each of them to be proportional to a channel. We  return to this point in Section~\ref{sec:channelssuchannels}.

\begin{definition}
\label{def:channelassemblage}
A \emph{channel assemblage} $\cC\cA=\{\cI_x\}_x=\{\Lambda_{a|x}\}_{a,x}$ for a channel $\Lambda$ is a collection of instruments $\cI_x$ for $\Lambda$, i.e., it holds $\sum_a\Lambda_{a|x}=\Lambda$ for all $x$.
\end{definition}

\begin{definition}
\label{def:steerable}
We say that a channel assemblage $\cC\cA=\{\Lambda_{a|x}\}_{a,x}$ is \emph{unsteerable} if there exists an instrument $\{{\Lambda}_{\lambda}\}_\lambda$ and conditional probabilities distributions $p(a|x,\lambda)$ such that
\beq
\Lambda_{a|x} = \sum_\lambda p(a|x,\lambda) {\Lambda}_\lambda.
\eeq
\end{definition}
An unsteerable channel assemblage is such that all instruments that are part of the channel assemblage arise from the action of the same instrument $\{\Lambda_\lambda\}_\lambda$; each instrument $\{\Lambda_{a|x}\}_a$ in the unsteerable channel assemblage corresponds then just  to a classical processing of the information about which $\Lambda_\lambda$ was applied, potentially different for each $x$.

Notice that any measurement assemblage $\cM\cA=\{M^A_{a|x}\}_{a,x}$ on $A$ , with $\{M^A_{a|x}\}_{a}$ a POVM for every $x$, leads to a channel assemblage for $\channel{C}{B}$, via
\beq
\Lambda_{a|x}^{C\rightarrow B}[X]=\Tr_A(M^A_{a|x}\Lambda^{C\rightarrow AB}[X]).
\eeq

\begin{definition}
We say that a channel extension $\channel{C}{AB}$ is \emph{unsteerable} if measurement assemblages on $A$ only lead to unsteerable channel assemblages.
\end{definition}

\begin{definition}
The Choi-Jamio{\l}kowski~\cite{jamiolkowski1972linear,choi1975completely} isomorphic operator to a channel $\channel{C}{B}$ is the operator
\beq
J_{C'B}(\channel{C}{B}):=\channel{C}{B}[\psi_+^{CC'}],
\eeq
where $\psi_+^{CC'}$ is the density matrix corresponding to a fixed maximally entangled state of systems $C$ and $C'$, with $C'$ a copy of $C$.
\end{definition}

The definitions above comprise and generalize the more familiar corresponding notions for quantum states. Indeed, to fall back to the latter case, it is sufficient to consider the case of a channel with constant input, i.e., $\Lambda[\,\cdot\,]=\Tr(\,\cdot\,)\hat{\sigma}$, which we can identify with the (normalized!) state $\hat{\sigma}$. It is easy to see that there is the following mapping to such a special case:
\begin{itemize}
\item channel extension $\channel{C}{AB}$ of a channel $\channel{C}{B}$ $\mapsto$ extension $\rho^{AB}$ of a quantum state $\rho^B$;
\item dilation $V^{C \rightarrow AB}\,\cdot\,V^{C\rightarrow AB, \dagger}$ of a channel $\channel{C}{B}$ $\mapsto$ purification $\psi^{AB}$ of a quantum state $\rho^B$;
\item incoherent extension of Eq.~\eqref{eq:incoherent} $\mapsto$ separable extension $\sum_\lambda p_\lambda \rho_\lambda^A \otimes \rho^B_\lambda$ of a quantum state $\rho^B$;
\item coherent, that is, not incoherent, extension $\mapsto$ entangled, that is, not separable, extension of a quantum state;
\item instrument $\{\Lambda_a\}$ for a channel $\Lambda=\sum_a\Lambda_a$ $\mapsto$ ensemble $\{\rho_a\}$, with $\rho_a$'s subnormalized states, for a quantum state $\rho=\sum_a\rho_a$;
\item channel assemblage $\{\Lambda_{a|x}\}_{a,x}$ $\mapsto$ state assemblage $\{\rho_{a|x}\}_{a,x}$;
\item (un)steerable channel assemblage $\mapsto$ (un)steerable state assemblage.
\item (un)steerable channel extension $\mapsto$ (un)steerable state extension.
\end{itemize}
We remind the reader that a state assemblage for the state $\rho$ is a collection of ensembles $\{\rho_{a|x}\}_{a,x}$, one ensemble per value of $x$, such that $\sum_a \rho_{a|x}=\rho$ for all $x$. This corresponds to Definition~\ref{eq:incoherent} in the case $\Lambda[\,\cdot\,]= \Tr(\,\cdot\,) \rho$ and $\Lambda_{a|x}=\Tr(X) \rho_{a|x}$. An unsteerable assemblage for $\rho$ is one that can be written as $\rho_{a|x}=\sum_\lambda p(a|x,\lambda)\rho_\lambda$, with $\rho_\lambda$ subnormalized and such that $\sum_\lambda\rho_\lambda=\rho$ (compare Definition~\ref{def:steerable}). Finally, an unsteerable state extension $\rho^{AB}$ is such that, upon considering measurement assemblages on $A$, it only gives raise to unsteerable assemblages for $\rho^B$.

\section{Steerability of extension and coherence of extensions}

In the following we relate the coherence of channels extensions to their steerability, in the same way in which one relates the entanglement of an extension to the steerability allowed by such an extension.

\begin{proposition}
Every incoherent  channel extension leads to unsteerable channel assemblages upon steering attempts by Alice. Conversely, every unsteerable channel assemblage can be thought as arising from an incoherent channel extension.
\end{proposition}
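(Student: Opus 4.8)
The statement has two directions, and the plan is to handle them separately, using the definitions of incoherent extension (Def.~\ref{def:incoherent}), channel assemblage (Def.~\ref{def:channelassemblage}), and unsteerable channel assemblage (Def.~\ref{def:steerable}).

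For the first (``only if'') direction, I would start from an incoherent extension $\channel{C}{AB}=\sum_\lambda \channel{C}{B}_\lambda\otimes\sigma_\lambda^A$ with $\{\Lambda_\lambda\}_\lambda$ an instrument and $\sigma_\lambda^A$ normalized states. I then take an arbitrary measurement assemblage $\{M^A_{a|x}\}_{a,x}$ on $A$ and compute the induced channel assemblage via $\Lambda_{a|x}^{C\rightarrow B}[X]=\Tr_A(M^A_{a|x}\Lambda^{C\rightarrow AB}[X])$. Plugging in the incoherent form and using linearity of the partial trace gives $\Lambda_{a|x}=\sum_\lambda \Tr(M^A_{a|x}\sigma_\lambda^A)\,\Lambda_\lambda$. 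I would then verify that $p(a|x,\lambda):=\Tr(M^A_{a|x}\sigma_\lambda^A)$ is a valid conditional probability distribution in $a$ for each $x,\lambda$: nonnegativity follows from $M^A_{a|x}\geq 0$ and $\sigma_\lambda^A\geq 0$, and normalization $\sum_a p(a|x,\lambda)=\Tr((\sum_a M^A_{a|x})\sigma_\lambda^A)=\Tr(\sigma_\lambda^A)=1$ uses that $\{M^A_{a|x}\}_a$ is a POVM and $\sigma_\lambda^A$ is normalized. This matches exactly the form in Def.~\ref{def:steerable}, so the induced channel assemblage is unsteerable; since the measurement assemblage was arbitrary, the extension is unsteerable.

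For the converse (``every unsteerable channel assemblage arises from an incoherent extension''), I would take an unsteerable channel assemblage $\Lambda_{a|x}=\sum_\lambda p(a|x,\lambda)\Lambda_\lambda$ and explicitly construct an incoherent extension together with a measurement assemblage on $A$ reproducing it. The natural choice is to let $A$ carry an orthonormal basis $\{\ket{\lambda}\}_\lambda$ indexed by the hidden variable, set $\channel{C}{AB}:=\sum_\lambda \channel{C}{B}_\lambda\otimes\proj{\lambda}^{A}$ (the form of Eq.~\eqref{eq:instrument}, which is manifestly incoherent with $\sigma_\lambda^A=\proj{\lambda}$), and define for each $x$ the POVM $M^A_{a|x}:=\sum_\lambda p(a|x,\lambda)\proj{\lambda}^A$. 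One checks $M^A_{a|x}\geq 0$ and $\sum_a M^A_{a|x}=\sum_\lambda(\sum_a p(a|x,\lambda))\proj{\lambda}=\sum_\lambda\proj{\lambda}=\openone$, so $\{M^A_{a|x}\}_a$ is a genuine POVM. Then $\Tr_A(M^A_{a|x}\Lambda^{C\rightarrow AB}[X])=\sum_\lambda p(a|x,\lambda)\Lambda_\lambda[X]$ by orthonormality of the $\ket{\lambda}$, recovering exactly $\Lambda_{a|x}$. Thus the given assemblage is induced by a measurement on an incoherent extension.

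I do not expect a serious obstacle here: the result is essentially a direct translation, through the Choi–Jamio\l kowski correspondence and the linearity of the involved maps, of the analogous (well-known) statement for state steering, and both directions reduce to bookkeeping with POVMs and instruments. The one point deserving a sentence of care is that in the converse I must exhibit \emph{some} incoherent extension realizing the assemblage (not characterize all of them), so it suffices to use the canonical ``flag'' extension of Eq.~\eqref{eq:instrument}; and in the forward direction I should note that the argument works for \emph{every} measurement assemblage, which is what Def.~\ref{def:steerable} of an unsteerable extension requires.
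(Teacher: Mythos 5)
Your proof is correct and follows essentially the same route as the paper: the forward direction extracts $p(a|x,\lambda)=\Tr(M^A_{a|x}\sigma^A_\lambda)$ by linearity, and the converse uses the canonical flag extension $\sum_\lambda\Lambda_\lambda\otimes\proj{\lambda}^A$ with the POVM $M_{a|x}=\sum_\lambda p(a|x,\lambda)\proj{\lambda}^A$, exactly as in the paper. Your explicit check that the $p(a|x,\lambda)$ form valid conditional distributions and that the $M_{a|x}$ form genuine POVMs is a small but welcome addition the paper leaves implicit.
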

\begin{proof} The proof of the claim is straightforward. Indeed, for any measurement assemblage $\{M_{a|x}\}_{a,x}$, and any incoherent channel extension,
\[
\begin{aligned}
\Lambda_{a|x}^{C\rightarrow B}[\,\cdot\,]
&=\Tr_A(M^A_{a|x}\Lambda^{C\rightarrow AB}[\,\cdot\,])\\
&=\Tr_A(M^A_{a|x}\sum_\lambda \channel{C}{B}_\lambda[\,\cdot\,]\otimes \sigma_\lambda^A)\\
&=\sum_\lambda \channel{C}{B}[\,\cdot\,]_\lambda\Tr_A(M^A_{a|x}\sigma_\lambda^A)\\
&=\sum_\lambda \channel{C}{B}_\lambda[\,\cdot\,] p(a|x,\lambda).
\end{aligned}
\]
On the other hand, fix an unsteerable channel assemblage $\{\Lambda^{\textrm{US}}_{a|x}=\sum_\lambda p(a|x,\lambda)\Lambda_\lambda\}_{a,x}$. This can be obtained through the measurement assemblage $\{M_{a|x}=\sum_\lambda p(a|x,\lambda)\proj{\lambda}\}_{a,x}$ applied to the $A$ output of $\channel{C}{AB}=\sum_\lambda \channel{C}{B}_\lambda\otimes \proj{\lambda}^A$. Indeed, 
\[
\begin{aligned}
&\;\quad\Tr_A(M^A_{a|x}\Lambda^{C\rightarrow AB}[\,\cdot\,])\\
&=\Tr_A\Bigg[\left(\sum_{\lambda'}p(a|x,\lambda')\proj{\lambda'}^A\right)\\
&\qquad\quad\times\left(\sum_\lambda \channel{C}{B}_\lambda[\,\cdot\,]\otimes \proj{\lambda}^A\right)\Bigg]\\
&=\sum_{\lambda,\lambda'} \delta_{\lambda,\lambda'}p(a|x,\lambda')\channel{C}{B}_\lambda[\,\cdot\,]\\
&=\sum_\lambda p(a|x,\lambda)\channel{C}{B}_\lambda\\
&=\Lambda^{\textrm{US}}_{a|x}.
\end{aligned}
\] 
\end{proof}
Hence, if Bob verifies that Alice is able to steer his channel, it means that she has access to a coherent extension of the channel. This is a generalization of the fact that the ability of Alice to steer the quantum state of Bob implies that she has access to an entangled extension of the state held by Bob. On the other hand, an unsteerable channel assemblage is always compatible with a incoherent extension. This, in turn is a generalization of the fact that an unsteerable state assemblage is always compatible with Alice holding only a separable extension of the state held by Bob. Thus, channel steering is necessary and sufficient to establish the coherence of the channel extension in a semi-device-independent way, in the same way in which state steering is necessary and sufficient to establish the coherence (that is, the entanglement) of a state extension, in a semi-device-independent way~\cite{wisemanPRL2007,entverification}.

\section{Steerability of extensions and Choi-Jamio{\l}kowski isomorphism}
\label{sec:jamiolkowski}

The next result relates the steerability properties of a channel extension to the steerability properties of its Choi-Jamio{\l}kowski isomorphic state.

\begin{theorem}
\label{thm:choijamiolkowski}
The channel extension $\channel{C}{AB}$ of the channel $\channel{C}{B}$ is steerable if and only if $J_{C'AB}(\channel{C}{AB})$ is steerable by Alice acting on $A$.
\end{theorem}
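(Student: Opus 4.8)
The plan is to establish the equivalence by tracking how measurement assemblages on $A$ transform under the Choi-Jamio\l kowski isomorphism, exploiting the fact that $J_{C'AB}(\channel{C}{AB}) = \channel{C}{AB}[\psi_+^{CC'}]$ is the state obtained by feeding half of a maximally entangled state into the extension. The key observation is that applying a measurement $\{M^A_{a|x}\}_a$ on the $A$-output of $\channel{C}{AB}$ and then looking at the induced subchannels from $C$ to $B$ corresponds exactly, under Choi-Jamio\l kowski, to applying that same measurement on the $A$-part of $J_{C'AB}$ and looking at the resulting sub-states on $C'B$. Concretely, the subchannel $\Lambda_{a|x}^{C\rightarrow B}[\,\cdot\,]=\Tr_A(M^A_{a|x}\channel{C}{AB}[\,\cdot\,])$ has Choi operator $\Tr_A(M^A_{a|x}\, J_{C'AB})$, which is precisely the unnormalized conditional state in the state assemblage on $C'B$ that Alice produces.

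First I would fix notation: write $\sigma_{a|x}^{C'B} := \Tr_A\bigl((M^A_{a|x}\otimes\openone^{C'B})\, J_{C'AB}(\channel{C}{AB})\bigr)$ for the state assemblage obtained by Alice measuring on $A$, and note that $\sum_a \sigma_{a|x}^{C'B} = J_{C'B}(\channel{C}{B})$ is fixed (independent of $x$), so this is genuinely a state assemblage for $J_{C'B}$. By linearity of the Choi map, $\sigma_{a|x}^{C'B} = J_{C'B}(\Lambda_{a|x}^{C\rightarrow B})$. For the ``steerable $\Rightarrow$ steerable'' direction (contrapositive: unsteerable Choi-state $\Rightarrow$ unsteerable channel), suppose every state assemblage $\{\sigma_{a|x}^{C'B}\}$ arising this way is unsteerable, i.e.\ $\sigma_{a|x}^{C'B}=\sum_\lambda p(a|x,\lambda)\,\tau_\lambda^{C'B}$ with $\sum_\lambda \tau_\lambda^{C'B}=J_{C'B}(\channel{C}{B})$. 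The point is that each $\tau_\lambda^{C'B}$, being a summand of a Choi operator of a channel, is itself the Choi operator of a subchannel $\Lambda_\lambda^{C\rightarrow B}$ (positivity transfers via Choi, and $\sum_\lambda\Lambda_\lambda=\channel{C}{B}$ because Choi is a bijection), so $\Lambda_{a|x}=\sum_\lambda p(a|x,\lambda)\Lambda_\lambda$ and the channel assemblage is unsteerable; hence unsteerability of the channel extension follows. For the converse, run the same argument in reverse: an unsteerable channel assemblage $\Lambda_{a|x}=\sum_\lambda p(a|x,\lambda)\Lambda_\lambda$ pushes forward under Choi to $\sigma_{a|x}^{C'B}=\sum_\lambda p(a|x,\lambda)J_{C'B}(\Lambda_\lambda)$, an unsteerable state assemblage.

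The one subtlety, and the step I expect to require the most care, is the quantifier matching over \emph{which} assemblages arise. Steerability of the channel extension $\channel{C}{AB}$ means \emph{some} measurement assemblage on $A$ yields a steerable channel assemblage; steerability of $J_{C'AB}$ by Alice means \emph{some} measurement assemblage on $A$ yields a steerable state assemblage for $J_{C'B}$. Because the map $\{M^A_{a|x}\}\mapsto\{\Lambda_{a|x}\}$ and the map $\{M^A_{a|x}\}\mapsto\{\sigma_{a|x}^{C'B}\}$ are intertwined by the \emph{same} Choi isomorphism applied $x$-by-$x$, $a$-by-$a$, the set of channel assemblages realizable from $\channel{C}{AB}$ and the set of state assemblages realizable from $J_{C'AB}$ are in Choi-bijective correspondence, and under that bijection steerable maps to steerable (and unsteerable to unsteerable) by the two paragraphs above. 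So a given measurement assemblage steers the channel extension if and only if it steers the Choi state, and taking the existential quantifier over measurement assemblages on $A$ on both sides yields the claimed equivalence. I would close by remarking that the essential content is just that the Choi-Jamio\l kowski isomorphism is a positive-cone isomorphism commuting with partial trace over $A$, which is exactly what makes the LHV-type model on $C$ translate back and forth.
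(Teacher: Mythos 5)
Your proof is correct and follows essentially the same route as the paper: both hinge on the linearity of the Choi--Jamio{\l}kowski map intertwining $\Tr_A(M^A_{a|x}\,\cdot\,)$ with the induced channel/state assemblages, so that unsteerable decompositions transfer in both directions. You are in fact slightly more explicit than the paper on the reverse direction (checking that the $\tau_\lambda$'s pull back to an instrument via positivity and the sum condition), which the paper dispatches with ``similarly proven considering the inverse of the isomorphism.''
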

\begin{proof}
Since steerability is defined as a violation of unsteerability, the statement is equivalent to the fact that the channel extension is unsteerable if and only if its Choi-Jamio{\l}koski isomorphic state is unsteerable. The latter statement is straightforwardly checked using the properties of the isomorphism, in particular its linearity. Indeed, in one direction linearity implies that, for any measurement assemblage $\{M^{A}_{a|x}\}_{a,x}$,
\begin{multline*}
\begin{aligned}
\left(J_{C'AB}(\Lambda^{C\rightarrow AB})\right)^{C'B}_{a|x}:&=\Tr_A(M^A_{a|x}J_{C'AB}(\Lambda^{C\rightarrow AB}))\\
&=J_{C'B}(\Tr_A(M^A_{a|x}\Lambda^{C\rightarrow AB}[\,\cdot\,]))\\
&=J_{C'B}(\Lambda_{a|x}^{C\rightarrow B}).
\end{aligned}
\end{multline*}
Now, if  $\Lambda^{C\rightarrow AB}$ is an unsteerable extension, then the $\Lambda_{a|x}^{C\rightarrow B}$'s form an unsteerable channel assemblage, so that
\[
J_{C'B}(\Lambda_{a|x}^{C\rightarrow B})=\sum_\lambda p(a|x,\lambda) J_{C'B}(\Lambda^{C\rightarrow B}_\lambda),
\]
proving that $J_{C'AB}(\Lambda^{C\rightarrow AB})$ is unsteerable. The fact that if $J_{C'AB}(\channel{C}{AB})$ is unsteerable by Alice then the extension $\channel{C}{AB}$ is unsteerable, can be similarly proven considering the  inverse of the isomorphism.
\end{proof}

The just proven theorem shows that the study of the steerability of channel extensions can be reduced to the study of the steerability of states, in the sense that is sufficient to study the steerability of the Choi-Jamio{\l}koski state isomorphic to the extension. This is not surprising, as the Choi-Jamio{\l}kowski state isomorphic to a linear map encodes all information about that map. Actually, even the fact that a  channel extension is incoherent is encoded in its isomorphic state.
\begin{theorem}
The channel extension $\channel{C}{AB}$ of the channel $\channel{C}{B}$ is incoherent if and only if $J_{C'AB}(\channel{C}{AB})$ is $A:BC'$-separable.
\end{theorem}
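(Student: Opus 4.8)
The plan is to prove the two directions separately, exploiting the linearity and invertibility of the Choi-Jamio\l{}kowski isomorphism, much as in the proof of Theorem~\ref{thm:choijamiolkowski}. First I would treat the ``only if'' direction. Assume $\channel{C}{AB}$ is incoherent, so by Definition~\ref{def:incoherent} we may write $\channel{C}{AB}=\sum_\lambda \channel{C}{B}_\lambda\otimes\sigma^A_\lambda$ with $\{\channel{C}{B}_\lambda\}_\lambda$ an instrument and each $\sigma^A_\lambda$ a normalized state. Applying the isomorphism and using its linearity in the channel, one gets $J_{C'AB}(\channel{C}{AB})=\sum_\lambda J_{C'B}(\channel{C}{B}_\lambda)\otimes\sigma^A_\lambda$, since the maximally entangled input state $\psi_+^{CC'}$ is acted on only by the $C\to B$ part. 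Each $J_{C'B}(\channel{C}{B}_\lambda)$ is a positive operator on $C'B$ (because $\channel{C}{B}_\lambda$ is completely positive), and up to normalization it is a density operator; absorbing the normalization constants $p_\lambda=\Tr J_{C'B}(\channel{C}{B}_\lambda)$ into the weights yields precisely the generic form of an $A:BC'$-separable state $\sum_\lambda p_\lambda \,\sigma^A_\lambda\otimes\rho^{BC'}_\lambda$.

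For the ``if'' direction I would run this in reverse. Suppose $J_{C'AB}(\channel{C}{AB})$ is $A:BC'$-separable, i.e.\ $J_{C'AB}(\channel{C}{AB})=\sum_\lambda p_\lambda\,\sigma^A_\lambda\otimes\omega^{C'B}_\lambda$ with $\sigma^A_\lambda,\omega^{C'B}_\lambda$ normalized states and $p_\lambda\ge 0$, $\sum_\lambda p_\lambda=1$. Applying the inverse isomorphism (which is linear) to both sides, and using the fact that the inverse isomorphism acts only on the $C'B$ tensor factor while leaving $A$ untouched, I obtain $\channel{C}{AB}=\sum_\lambda \channel{C}{B}_\lambda\otimes\sigma^A_\lambda$, where $\channel{C}{B}_\lambda$ is the linear map whose Choi operator is $p_\lambda\,\omega^{C'B}_\lambda$. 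It remains to check that $\{\channel{C}{B}_\lambda\}_\lambda$ is an instrument in the sense of Definition~2: complete positivity of each $\channel{C}{B}_\lambda$ follows because its Choi operator $p_\lambda\,\omega^{C'B}_\lambda$ is positive semidefinite, and the normalization $\sum_\lambda \channel{C}{B}_\lambda=\channel{C}{B}$ follows by tracing out $A$ (using $\Tr\sigma^A_\lambda=1$) on the identity $\sum_\lambda \channel{C}{B}_\lambda\otimes\sigma^A_\lambda=\channel{C}{AB}$ together with $\Tr_A\circ\channel{C}{AB}=\channel{C}{B}$. Hence $\channel{C}{AB}$ is incoherent.

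The main obstacle, such as it is, lies in being careful about normalization and about which tensor factor the isomorphism acts on: the Choi operator of a \emph{subchannel} is subnormalized rather than a density operator, so one must consistently factor out traces $p_\lambda$ when passing between the ``separable state'' bookkeeping (unit-trace local states) and the ``instrument'' bookkeeping (subchannels summing to a channel). One should also note that the decomposition witnessing separability of $J_{C'AB}(\channel{C}{AB})$ need not a priori have the $BC'$ factors supported in the ``correct'' subspace, but this is automatic since every operator in the range of the isomorphism restricted to maps $C\to B$ is, by construction, $J_{C'B}$ of some linear map; no positivity or trace condition is lost. Everything else is a direct translation through the isomorphism, exactly parallel to Theorem~\ref{thm:choijamiolkowski}, and I would present it in that compact style rather than belaboring the linear-algebra details.
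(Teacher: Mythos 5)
Your proposal is correct and follows essentially the same route as the paper, whose proof consists only of the remark that the ``only if'' direction is obvious and the ``if'' direction follows from the one-to-one correspondence between operators and maps together with linearity; your write-up simply fills in those details (including the correct bookkeeping of the subnormalization $p_\lambda$ and the verification that the $\channel{C}{B}_\lambda$ form an instrument).
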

\begin{proof}
The ``only if'' implication is obvious. The proof of the ``if'' is also straightforward, once one makes use of the  one-to-one correspondence between operators and maps, and of linearity (see also~\cite{horodecki2003entanglement}).
\end{proof}

It is clear that any non-trivial dilation extension is coherent and steerable, since it corresponds to a pure $J_{C'AB}$ that is $A:BC'$ entangled, and hence steerable by Alice~\cite{wisemanPRL2007,quantifyingsteering}.

\section{Complementary channel for given extension}
In quantum information it is customary and useful to consider the \emph{complementary channel}~\cite{wilde2013quantum} $\channel{C}{A}$ of a channel $\channel{C}{B}$, defined uniquely, up to a isometry at the output, by the consideration of the dilation $\channel{C}{AB}[\,\cdot\,]=V^{C\rightarrow AB}\,\cdot\, V^{C\rightarrow AB,\dagger}$, via $\channel{C}{A}=\Tr_B\circ \channel{C}{AB}$. It should be clear that we can use this latter expression to introduce the notion of complementary channel with respect to a general channel extension, rather than with respect to the channel dilation.

We observe that the complementary channel associated to an incoherent extension is entanglement breaking~\cite{horodecki2003entanglement}, since
\[
\begin{split}
\Tr_B(\sum_\lambda \channel{C}{B}_\lambda [\,\cdot\,]\otimes\sigma^A_\lambda)&=\sum_\lambda \Tr_C((\Lambda_\lambda^{C\rightarrow B})^\dagger[\openone^B] \, \cdot \, )\sigma^A_\lambda\\
&=\sum_\lambda \Tr_C(M^C_\lambda \, \cdot \, )\sigma^A_\lambda,
\end{split}
\]
where $(\Lambda_\lambda^{C\rightarrow B})^\dagger$ is the dual map of $\Lambda_\lambda^{C\rightarrow B}$ and the operators $M^C_\lambda:=(\Lambda_\lambda^{C\rightarrow B})^\dagger[\openone^B] $ form a POVM on $C$.

We emphasize that the implication in the opposite direction does not hold: even if the generalized complementary channel is entanglement breaking, the corresponding extension does not need to  be incoherent. What holds true, as we have seen in Theorem~\ref{thm:choijamiolkowski}, is that a an extension $\channel{C}{AB}$ is incoherent if and only if its Choi-Jamio{\l}kowski isomorphorphic operator $J_{C'AB}(\channel{C}{AB})$ is $A:BC'$-separable. On the other hand, the fact that the complementary channel is entanglement breaking is equivalent to $\Tr_B(J_{C'AB}(\channel{C}{AB}))$  being $A:C'$-separable. Thus, via the isomorphism, any example---up to local filtering---of tripartite state for $ABC'$  that is not $A:BC'$-separable but is such that its reduction on $A:C'$ is separable provides an example of extension that is not incoherent but such that the generalized complementary channel is entanglement breaking. This is interesting because it implies that, even though $A$ by itself may only have access to classical information about the input $C$, she may still exhibit some quantum control on the channel from $C$ to $B$.

\section{Steerability of extensions, channel tomography, and ancilla-assisted channel tomography}

As made clear with Theorem~\ref{thm:choijamiolkowski}, Bob can faithfully test the steerability of an extension by the use of just one entangled input. From the physical point of view, this is the correspondent of ancilla-assisted channel tomoghaphy~\cite{altepeter2003ancilla}, which becomes in this case ancilla-assisted subchannel tomography. Bob may elect not to use entanglement, or be unable to. In such a case, in order to realize the tomography of the subchannels, he needs to use several inputs. Notice nonetheless that, depending on the extension, channel tomography may not be required to establish the steerability of the extension. Indeed, if there is, e.g., a single input $\rho^C$ on $C$ alone (rather than on $C$ and the ancillary system $C'$) such that the state on $AB$ given by $\channel{C}{AB}[\rho^C]$ is steerable by Alice, hence necessarily entangled, then the coherence of the channel extension is already verified. What we want to emphasize, though, is that it might happen that $\channel{C}{AB}[\rho^C]$ is separable, or even factorized, for any input $\rho^C$, even though the extension is coherent. Such is the case, for example, for the trivial channel fixed output channel  $\channel{C}{B}[\,\cdot\,]=\Tr(\,\cdot\,)\hat{\sigma}^B$ that we have already encountered, and its extension $\channel{C}{AB}[X^C]=X^A\otimes \hat{\sigma}^B$. In this case the complementary channel $\Tr_B\circ  \channel{C}{AB} =\id^{C\rightarrow A}$ is the identity channel, i.e. perfectly coherent. In such a case, it is never possible to observe state steering (or even correlations at all!) between the two output spaces $A$ and $B$. Nonetheless the coherence of the extension can be revealed by using non-orthogonal inputs. The easiest way of seeing this is by considering how the use of several inputs is completely equivalent to the use of a (maximally) entangled input of $C$ and $C'$ with the preparation of inputs on $C$ by means of measurements on $C'$. Indeed, such measurements could be the same that, together with the steering on Alice's side, would allow us to conclude that $J_{C'AB}(\Lambda^{C\rightarrow AB})$ was $A:BC'$ entangled.

\section{Quantifying the steerability of channel extensions}

Consider any measure $S_{X\rightarrow Y}$ of $X\rightarrow Y$-steerability on states $\rho^{XY}$. We can define an induced measure of steerability for the extension $\channel{C}{AB}$ of $\channel{C}{B}$ as
\beq
S_{A\rightarrow (C\rightarrow B)}(\channel{C}{AB}):=\sup_{\rho_{CD}}S_{A\rightarrow BD}(\channel{C}{AB}[\rho_{CD}]),
\eeq
where the supremum is over all inputs $\rho_{CD}$, where $D$ is an arbitrary ancillary system.
The quantification of steering has received focused attention only recently, and not many steering quantifiers have been defined and analyzed yet~\cite{quantifyingsteering,piani2014einstein,gallego2014resource}. A steering quantifier for states often gets defined by first defining a steering quantifier for assemblages, and then ``lifting'' the latter to a steering quantifier for states optimizing over all possible steering measurement assemblages. That is,
\[
S_{X\rightarrow Y}(\rho^{XY})=\sup_{\{M^X_{a|x}\}_{a,x}} S(\{\Tr_X(M^X_{a|x} \rho^{XY})\}_{a,x}),
\]
with $S(\{\rho_{a|x}\}_{a,x})$ the steering quantifier for the assemblage $\{\rho_{a|x}\}_{a,x}$.

If  $S_{X\rightarrow Y}$ is convex and invariant under isometries on $Y$, as reasonable~\cite{gallego2014resource}, then,
\begin{multline*}
\sup_{\rho_{CD}}S_{A\rightarrow BD}(\channel{C}{AB}[\rho_{CD}])\\
=\max_{\psi_{CC'}} S_{A\rightarrow BC'}(\channel{C}{AB}[\psi_{CC'}]),
\end{multline*}
because one can focus on pure input states, and by virtue of the Schmidt decomposition $D$ can be taken to be a copy $C'$ of $C$.

It can be argued that a steering quantifier $S_{X\rightarrow Y}$ should not increase under one-way LOCC (a subclass of LOCC operations realized by classical communication in one direction only) from $Y$ to $X$~\cite{gallego2014resource,pianiprep}. Such a class includes obviously local operations, in particular local operations on $X$. If a steering quantifier has such a property, then
\[
S_{A\rightarrow (C\rightarrow B)}(\channel{C}{AB}) \geq S_{A\rightarrow (C\rightarrow B)}(\Gamma^A\circ\channel{C}{AB}),
\]
This can be intuitively be understood, since the local operation $\Gamma^A$ effectively restricts the class of measurements that can be applied to the output $A$ of $\channel{C}{AB}$. Thus, we find that channel extensions that are mapped one into the other by means of an additional local channel on $A$ respect a steering hierarchy at the quantitative level, with dilations having always the highest degree of steerability. This parallels the fact that purifications have the highest degree of steerability in the case of state extensions~\cite{pianiprep}.

Specific choices for $S_{X\rightarrow Y}$ which are efficiently computable at least for fixed assemblages include the steerable weight~\cite{quantifyingsteering} and the robustness of steering~\cite{piani2014einstein}. Adopting them in our case, one can immediately define the steerable weight of extensions and the steering robustness of extensions.

Notice that, in the same way in which one focuses on specific state assemblages, one can also focus on specific channel assemblages, and define steering measures for those, via
\[
\begin{aligned}
S(\{\Lambda^{C\rightarrow B}_{a|x}\}_{a,x})
&= \sup_{\rho_{CD}} S(\{\Lambda^{C\rightarrow B}_{a|x}[\rho_{CD}]\}_{a,x})\\
&= \max_{\psi_{CC'}} S(\{\Lambda^{C\rightarrow B}_{a|x}[\psi_{CC'}]\}_{a,x}),
\end{aligned}
\]
with the second equality valid under the conditions mentioned above.

The detailed quantitative study of the steerability of channel assemblages and channel extensions will be considered elsewhere~\cite{pianiprep}.

\section{Coherence of channel extensions and local operations and classical communication}
\label{sec:LOCC}

All channel extensions can be thought of as originating from a given dilation, followed by partial trace on some part of the environment/extension. That is, each channel extension is the result of Alice having access to only a certain part of the environment to which quantum information has leaked to. Depending on what kind of information remains available in the subsystem $A$ of the environment to which Alice has access to, the channel extension $\channel{C}{AB}$ may be coherent or not.

There is another way of thinking of channel extensions though, and it is by imagining how the extension can be \emph{implemented}. The ``how'' refers here to the operations and/or resources at disposal to realize, we could say, the inside of the (quantum) ``box'' $\channel{C}{AB}$. Indeed, this is the view behind Definition~\ref{def:incoherent} of incoherent channel extension: it is an extension that requires only one-way LOCC to be implemented, while in general an extension would require quantum communication and/or global operations.

We argue now that the incoherent channel extensions of Definition~\ref{def:incoherent} are the most general extensions that can be realized by \emph{full} LOCC. The reason is simple. General LOCC operations differ from one-way LOCC operations because of the possibility of two-way classical communication. The question is whether ``backward'' classical communication from the environment $A$ can enlarge the set of channel extensions. The answer is negative, for a simple reason. The system $A$ is initially uncorrelated with the input $C$. That means that whatever the communication from $A$ at any point in the LOCC protocol, it does not provide any useful information on how to process the input $C$, and can be ``simulated'' locally. Only classical communication from $C$ to $A$ is relevant, and can be considered to happen after local operations (described by the subchannels) on $C$, mapping it to $B$, have been completed. Hence LOCC extensions are the same as one-way LOCC extensions. Notice that the same reasoning applies to general extensions that require quantum communication: only quantum communication from $C$ to $A$ is needed at most to realize the most general (i.e., possibly coherent) extension.

In summary, the incoherent extensions of Definition~\ref{def:incoherent} correspond to the most general extensions that can be realized classically, that is, via LOCC, or, equivalently, without quantum communication or shared entanglement.

\section{Hidden states and hidden channels}
\label{sec:channelssuchannels}

In this section we discuss the fact that the definition of incoherent extensions, see Definition~\ref{def:incoherent}, involves subchannels, rather than rescaled (by means of a probability distribution) channels.

An alternative definition of incoherent operations could have been
\beq
\label{eq:incoherentchannel}
\channel{C}{AB}=\sum_\lambda p_\lambda \hat{\Lambda}^{C\rightarrow B}_\lambda\otimes \sigma^A_\lambda,
\eeq
where we have emphasized by a $\hat{}$ (hat) that the maps $\hat{\Lambda}^{C\rightarrow B}_\lambda$ are in this case channels themselves, and $\{p_\lambda\}_\lambda$ is a probability distribution. In such a case one can imagine that one random channel among the $\hat{\Lambda}^{C\rightarrow B}_\lambda$'s is applied to the input, with a specific probability for each channel to take place. Thus, the convex combination that leads to the channel $\channel{C}{B}=\Tr_A\circ \channel{C}{AB}=\sum_\lambda p_\lambda \hat{\Lambda}^{C\rightarrow B}_\lambda$ can be thought as the result of \emph{ignorance} about which actual channel $\hat{\Lambda}^{C\rightarrow B}_\lambda$ \emph{is going} to be applied, but, at least in principle, someone---for example whoever has access to $A$, in the case the states $\sigma^A_\lambda$ are orthogonal---would be able to know this information \emph{in advance}. Thus, in this case one can really think of ``hidden channels''. These would be hidden to Bob, but not to some other party, e.g., Alice, in the same way in which Alice may know about hidden states of the system held by Bob in the case of an unsteerable state assemblage.

We emphasize that definitions~\eqref{eq:incoherent} and~\eqref{eq:incoherentchannel} coincide when we fall back to the case of state extensions, rather than general channel extensions. Indeed, incoherent extensions in the sense of Eq.~\eqref{eq:incoherent}, when we specialize to states  have subchannels $\Lambda_\lambda^{C\rightarrow B}[\,\cdot\,]=\Tr(\,\cdot\,)\sigma_\lambda= p_\lambda \Tr(\,\cdot\,)\hat{\sigma}_\lambda=p_\lambda \hat{\Lambda}_\lambda(\,\cdot\,)$, with  $\{p_\lambda=\Tr(\sigma_\lambda)\}_\lambda$ a probability distribution, $\hat{\sigma}_\lambda=\sigma_\lambda/p_\lambda$ states, and each $\hat{\Lambda}(\,\cdot\,)$ a channel. Thus, we recover Eq.~\eqref{eq:incoherentchannel}.

In the case of the definition of incoherent channel extensions that we decided to adopt, Definition~\ref{def:incoherent}, one can of course discuss the notion of ``hidden subchannels''. What is key, though, is that, even if someone has knowledge of the specific \emph{set of subchannels}, if the latter are generic subchannels---rather than trace-rescaling suchannels---the ``which subchannel'' information is not available in advance to anyone: which subchannel gets actually applied necessarily depends also on the input to the channel.

If we assume the perspective that the notion of incoherent channel extension should be able to incorporate the maximal classical information available to a classical bystander about the transformation of the input, rather than information about how an arbitrary input is going to be transformed \emph{in the future}, we thus see Definition~\ref{def:incoherent} as the correct one. Other considerations support this point of view: \begin{itemize}
\item this definition captures the notion of most general extension realized by LOCC, as discussed in Section~\ref{sec:LOCC};
\item this definition is consistent with the special case of standard incoherent state extensions (separable states), as just discussed in this section;
\item allows a direct, simple and elegant correspondence between incoherent channel extensions and incoherent state extensions via the Choi-Jamio{\l}kowski isomorphism, as discussed in Section~\ref{sec:jamiolkowski}.
\end{itemize}

\subsection{Incoherent extensions of extremal channels}

It is worth reminding the reader that the structure of the convex set of channels is far from trivial. A channel from $C$ to $B$ admits \emph{Kraus decompositions} of the form
\beq
\label{eq:kraus}
\channel{C}{B}[\,\cdot \,]=\sum_iK_i \, \cdot \,K^{ \dagger}_i,
\eeq
with the \emph{Kraus operators} $K_i=K_i^{C\rightarrow B}$, and $\sum_i K_i^\dagger K_i=\openone^C$.
The extremal channels admit a decomposition~\eqref{eq:kraus} with Kraus operators such that the set of matrices $\{K_i^\dagger K_j\}_{i,j}$ is linearly independent. So, for example, there are extremal channels with more than one Kraus operator that are extremal.

If we were to adopt the definition~\eqref{eq:incoherentchannel} for incoherent extensions, the only allowed incoherent extensions for an extremal channel $\channel{C}{B}$ would be trivial, that is, of the form $\channel{C}{B}[\,\cdots\,]=\channel{C}{B}\otimes \sigma^A$. On the other hand, it is clear that more details about the transformation can be acquired by a classical Alice; for example, information about which Kraus operator got applied, via the channel extension
\[
\sum_i K_i\,\cdot\,K_i^\dagger\otimes \proj{i}^A.
\]

\section{Broadcast channels}

We have considered extensions of a fixed channel. We already mentioned that such extensions are also called \emph{broadcast channels}, i.e., channels with one sender (Charlie) and two receivers (Alice and Bob).

If one considers a fixed broadcast channel $\channel{C}{AB}$, it is interesting to consider both \emph{channel reductions},
\[
\begin{aligned}
\channel{C}{B} &= \Tr_A\circ \channel{C}{AB},\\
\channel{C}{A} &= \Tr_B\circ \channel{C}{AB},
\end{aligned}
\]
which are each the generalized complementary channel of the other with respect to the broadcast channel $\channel{C}{AB}$.

From this perspective, it natural to study how much each receiver can steer the channel between the sender and the other receiver. Notice that all the considerations and results we discussed so far remain perfectly valid. So, for example, we can focus on the steerability and coherence properties of $J_{C'AB}(\channel {C}{AB})$ in the $A:BC'$ and $B:AC'$ bipartite cuts to study the steerability and coherence properties of two the channel reductions (with respect to the given extension).

Note that we are now generalizing the conceptual approach to the study of,steering where one does not look at a bipartite extension of a fixed reduced state, but is rather interested in the steering properties of a bipartite state, i.e., in the extent to which one party can steer the other party. 

Applications of channel steering to the problem of broadcast quantum communication, in particular to the case where the receivers are allowed to communicate classically and cooperate, is left open for future research.

\section{Conclusions}

We have generalized the notion of quantum steering from property of quantum states to property of quantum channels. We have discussed differences and similarities between the case of states and the case of channels, focusing in particular on the issue of whether it makes sense to say that channel steering rules out the possibility of hidden channels, rather than the possibility of an incoherent extension. We also discussed in the detail how the notion for incoherence of extensions that we have adopted is well motivated operationally, since it comprises the most general LOCC implementation of a channel extension. Quite importantly, we have shown both a qualitative and a quantitative connection between state steering and channels steering. On one hand, the Choi-Jamio{\l}kowski isomorphism allows us to map the study of channel steerability to the study of state steerability, leveraging known results in a new context. On the other hand, we have shown how tools developed to quantify state steering can be readily adopted to quantify channel steering.

Finally, we note how, in the framework offered by quantum information processing,  the status of quantum steering (in the case of  states) has changed significantly: from weird quantum feature, to property that allows the semi-device-independent verification of entanglement. Moving to the analysis of the steering of channels is a natural step, that could see steering become an even more useful tool in quantum information processing, particularly in a multipartite scenario revolving around broadcast channels.

\section*{Ackowlegements}
I would like to thank J. Watrous and D. Leung for discussions. This research has been partially supported by NSERC and CIFAR.


\begin{thebibliography}{10}
\newcommand{\enquote}[1]{``#1''}

\bibitem{PhysRevLett.113.160403}
R.~Uola, T.~Moroder, and O.~G\"uhne, \enquote{Joint measurability of
  generalized measurements implies classicality,} Phys. Rev. Lett.
  \textbf{113}, 160403 (2014).

\bibitem{PhysRevLett.113.160402}
M.~T. Quintino, T.~V\'ertesi, and N.~Brunner, \enquote{Joint measurability,
  einstein-podolsky-rosen steering, and bell nonlocality,} Phys. Rev. Lett.
  \textbf{113}, 160402 (2014).

\bibitem{PhysRevLett.113.140402}
K.~Sun, J.-S. Xu, X.-J. Ye, Y.-C. Wu, J.-L. Chen, C.-F. Li, and G.-C. Guo,
  \enquote{Experimental demonstration of the einstein-podolsky-rosen steering
  game based on the all-versus-nothing proof,} Phys. Rev. Lett. \textbf{113},
  140402 (2014).

\bibitem{einstein1935can}
A.~Einstein, B.~Podolsky, and N.~Rosen, \enquote{Can quantum-mechanical
  description of physical reality be considered complete?} Physical review
  \textbf{47}, 777 (1935).

\bibitem{schroedinger}
E.~Schr{\"o}dinger, \enquote{Discussion of probability relations between
  separated systems,} Mathematical Proceedings of the Cambridge Philosophical
  Society \textbf{31}, 555--563 (1935).

\bibitem{wisemanPRL2007}
H.~M. Wiseman, S.~J. Jones, and A.~C. Doherty, \enquote{Steering, entanglement,
  nonlocality, and the einstein-podolsky-rosen paradox,} Phys. Rev. Lett.
  \textbf{98}, 140402 (2007).

\bibitem{jonesPRA2007}
S.~J. Jones, H.~M. Wiseman, and A.~C. Doherty, \enquote{Entanglement,
  einstein-podolsky-rosen correlations, bell nonlocality, and steering,} Phys.
  Rev. A \textbf{76}, 052116 (2007).

\bibitem{entverification}
E.~G. Cavalcanti, M.~J.~W. Hall, and H.~M. Wiseman, \enquote{Entanglement
  verification and steering when alice and bob cannot be trusted,} Phys. Rev. A
  \textbf{87}, 032306 (2013).

\bibitem{branciard2012}
C.~Branciard, E.~G. Cavalcanti, S.~P. Walborn, V.~Scarani, and H.~M. Wiseman,
  \enquote{One-sided device-independent quantum key distribution: Security,
  feasibility, and the connection with steering,} Phys. Rev. A \textbf{85},
  010301 (2012).

\bibitem{piani2014einstein}
M.~Piani and J.~Watrous, \enquote{Einstein-podolsky-rosen steering provides the
  advantage in entanglement-assisted subchannel discrimination with one-way
  measurements,} arXiv preprint arXiv:1406.0530  (2014).

\bibitem{yard2011quantum}
J.~Yard, P.~Hayden, and I.~Devetak, \enquote{Quantum broadcast channels,}
  Information Theory, IEEE Transactions on \textbf{57}, 7147--7162 (2011).

\bibitem{nielsen2010quantum}
M.~A. Nielsen and I.~L. Chuang, \emph{Quantum computation and quantum
  information} (Cambridge University Press, 2010).

\bibitem{davies1970operational}
E.~B. Davies and J.~T. Lewis, \enquote{An operational approach to quantum
  probability,} Communications in Mathematical Physics \textbf{17}, 239--260
  (1970).

\bibitem{michalhorodeckientmeasures}
M.~Horodecki, \enquote{Entanglement measures,} Quantum information and
  computation \textbf{1}, 3 (2001).

\bibitem{hughston1993complete}
L.~P. Hughston, R.~Jozsa, and W.~K. Wootters, \enquote{A complete
  classification of quantum ensembles having a given density matrix,} Physics
  Letters A \textbf{183}, 14--18 (1993).

\bibitem{jamiolkowski1972linear}
A.~Jamio{\l}kowski, \enquote{Linear transformations which preserve trace and
  positive semidefiniteness of operators,} Reports on Mathematical Physics
  \textbf{3}, 275--278 (1972).

\bibitem{choi1975completely}
M.-D. Choi, \enquote{Completely positive linear maps on complex matrices,}
  Linear algebra and its applications \textbf{10}, 285--290 (1975).

\bibitem{horodecki2003entanglement}
M.~Horodecki, P.~W. Shor, and M.~B. Ruskai, \enquote{Entanglement breaking
  channels,} Reviews in Mathematical Physics \textbf{15}, 629--641 (2003).

\bibitem{quantifyingsteering}
P.~Skrzypczyk, M.~Navascu\'es, and D.~Cavalcanti, \enquote{Quantifying
  einstein-podolsky-rosen steering,} Phys. Rev. Lett. \textbf{112}, 180404
  (2014).

\bibitem{wilde2013quantum}
M.~M. Wilde, \emph{Quantum information theory} (Cambridge University Press,
  2013).

\bibitem{altepeter2003ancilla}
J.~B. Altepeter, D.~Branning, E.~Jeffrey, T.~Wei, P.~G. Kwiat, R.~T. Thew,
  J.~L. OÕBrien, M.~A. Nielsen, and A.~G. White, \enquote{Ancilla-assisted
  quantum process tomography,} Physical Review Letters \textbf{90}, 193601
  (2003).

\bibitem{gallego2014resource}
R.~Gallego and L.~Aolita, \enquote{The resource theory of steering,} arXiv
  preprint arXiv:1409.5804  (2014).

\bibitem{pianiprep}
M.~Piani. In preparation.

\end{thebibliography}
\end{document}